\providecommand{\algorithmname}{Algorithm}
\theoremstyle{plain}
\newtheorem{thm}{\protect\theoremname}
\theoremstyle{plain}
\newtheorem{prop}[thm]{\protect\propositionname}
\def\BibTeX{{\rm B\kern-.05em{\sc i\kern-.025em b}\kern-.08em
    T\kern-.1667em\lower.7ex\hbox{E}\kern-.125emX}}
\providecommand{\propositionname}{Proposition}
\providecommand{\theoremname}{Theorem}
\begin{document}

\title{\huge Multi-Satellite Cooperative MIMO Transmission: \\Statistical CSI-Aware RSMA Precoding Design}

\author{Sangwon Jo, \IEEEmembership{Graduate Student Member, IEEE}, and \\ Seok-Hwan Park, \IEEEmembership{Senior Member, IEEE} \thanks{This work was supported in part by the Regional Innovation System $\&$ Education (RISE) initiative, funded by the Ministry of Education (MOE) and administered by the National Research Foundation (NRF) of Korea; in part by the NRF funded by the MOE under Grant RS-2019-NR040079; and in part by NRF funded by the MSIT under Grant RS-2023-00238977.}
\thanks{The authors are with the Division of Electronic Engineering, Jeonbuk National University, Jeonju 54896, Korea (email: tkddnjs9803@jbnu.ac.kr and seokhwan@jbnu.ac.kr).
}
}

\maketitle
\begin{abstract}
We investigate inter-satellite cooperative transmission in a multiple low-Earth orbit (LEO) satellite communication system to enhance spectral efficiency. Specifically, we design multiple-input multiple-output (MIMO) precoding at LEO satellites for cooperative rate-splitting multiple access (RSMA). Given the difficulty of acquiring instantaneous channel state information (iCSI) due to long delays and Doppler effects, we formulate an ergodic max-min fairness rate (MMFR) maximization problem based on statistical CSI (sCSI).
To address the challenge of ergodic rate evaluation, we approximate the problem using closed-form upper bounds and develop a weighted minimum mean squared error-based algorithm to obtain a stationary point.
Simulation results demonstrate that the proposed sCSI-based RSMA scheme approaches iCSI-based performance and significantly outperforms conventional space-division multiple access.
\end{abstract}

\begin{IEEEkeywords}
Satellite communications, cooperative satellites, massive MIMO, rate-splitting multiple access, statistical CSI.
\end{IEEEkeywords}

\theoremstyle{theorem}
\newtheorem{theorem}{Theorem} 
\theoremstyle{proposition}
\newtheorem{proposition}{Proposition} 
\theoremstyle{lemma}
\newtheorem{lemma}{Lemma} 
\theoremstyle{corollary}
\newtheorem{corollary}{Corollary} 
\theoremstyle{definition}
\newtheorem{definition}{Definition}
\theoremstyle{remark}
\newtheorem{remark}{Remark}

\section{Introduction} \label{sec:intro}

Massive multiple-input multiple-output (mMIMO) utilizes a large number of antennas to deliver reliable and high data-rate communication links \cite{Bjornson-et-al:TWC20}.
Several recent works have investigated mMIMO transmission for satellite communication (SATCOM) using low-Earth-orbit (LEO) satellites \cite{You:JSAC20, Li:TC22, You:JSAC22, Xiang:TC24, Xiang:TAES24, Zhang:TWC24}.
In particular, reference \cite{You:JSAC20} studied the channel modeling and designed downlink precoders and uplink receivers based on statistical channel state information (sCSI), recognizing that acquiring instantaneous CSI (iCSI) is highly challenging in LEO SATCOM due to impairments such as propagation delay and Doppler shifts, whereas sCSI remains stable over longer time scales.
In \cite{Li:TC22}, it was shown that single-stream precoding maximizes the ergodic sum rate even when each user terminal (UT) is equipped with multiple antennas, owing to the rank-one nature of the channel matrices between each LEO satellite and UT.
In line with the growing density of LEO satellite constellations \cite{Abdelsadek:TC23}, mMIMO transmission for multi-LEO SATCOM systems has been addressed in \cite{Xiang:TC24, Xiang:TAES24, Zhang:TWC24}.
While references \cite{Xiang:TC24} and \cite{Xiang:TAES24} considered coordinated transmission and reception across satellites, where only CSI-related parameters are exchanged among satellites, reference \cite{Zhang:TWC24} investigated a fully-cooperative transmission scheme in which multiple LEO satellites exchange both CSI and UTs' data via inter-satellite links (ISLs).

While the aforementioned works focused on conventional space-division multiple access (SDMA) techniques, the integration of advanced rate-splitting multiple access (RSMA) technique \cite{Park:Net23} into SATCOM systems was explored in \cite{Yin:TC21, Xu:TC24, Liu:TWC24}.
Specifically, \cite{Yin:TC21} addressed RSMA-based multigroup multicast beamforming in multi-beam SATCOM systems under imperfect CSI, \cite{Xu:TC24} investigated joint geostationary orbit (GEO) and LEO precoding and message splitting in a hybrid GEO-multiple LEO network, and \cite{Liu:TWC24} focused on energy-efficient RSMA design in ISAC-enabled LEO satellite systems. However, these works did not consider cooperative transmission among multiple LEO satellites \cite{Yin:TC21, Xu:TC24, Liu:TWC24}, nor did they address practical optimization with sCSI in light of challenges in acquiring iCSI \cite{Yin:TC21, Liu:TWC24}.

Motivated by these insights, this work investigates the use of RSMA for cooperative downlink mMIMO transmission in a multiple-satellite SATCOM system, leveraging ISLs. Our objective is to optimize the downlink precoder to maximize the ergodic max-min fairness rate (MMFR) among UTs, relying solely on sCSI.
To circumvent the computational burden of estimating the ergodic rate via sample averaging, we reformulate the problem using a closed-form upper bound on the ergodic rate. Given the non-convexity of the resulting optimization problem, we develop a  weighted minimum mean square error (WMMSE)-based algorithm \cite{Christensen:TWC08}. Numerical results demonstrate that the proposed RSMA-based cooperative scheme significantly outperforms conventional SDMA and non-cooperative baselines, achieving performance close to that with perfect iCSI.

Main contributions of this work are summarized as follows.
\begin{itemize}
    \item We consider a multi-LEO SATCOM system with ISLs, enabling cooperative mMIMO RSMA transmission for effective interference management.
    \item Recognizing the challenges in acquiring iCSI, we leverage sCSI, which remains stable over longer periods. We formulate the problem of optimizing the downlink precoders to maximize the ergodic MMFR among UTs.
    \item To overcome the challenge of computing ergodic rates, we reformulate the problem using closed-form upper bounds. We then develop a WMMSE-based algorithm, which guarantees a stationary point.
    \item Through numerical evaluations, we demonstrate that the proposed sCSI-based RSMA scheme achieves performance close to that of its iCSI-based counterpart, outperforming conventional SDMA and non-cooperative baseline schemes.
\end{itemize}

\section{System Model\label{sec:System-Model}}

\begin{figure}
\centering\includegraphics[width=1.0\linewidth]{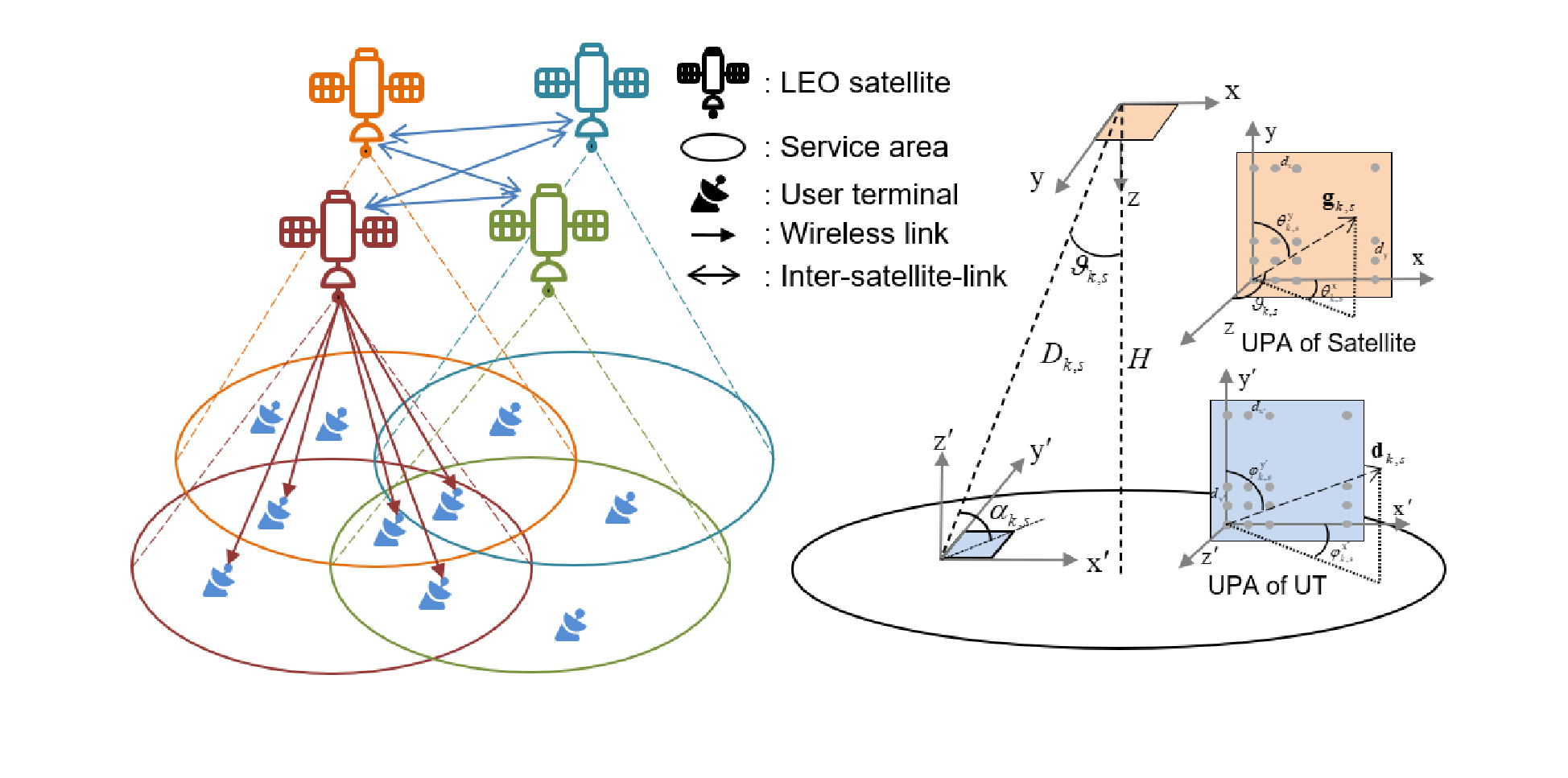}
\caption{\label{fig:System-model} Illustration of the downlink of multiple-LEO satellite communication system with $S=4$ and $K=10$.}
\end{figure}

We consider a downlink transmission system in an NTN comprising multiple LEO satellites, cooperating with each other via ISLs, and ground UTs as shown in Fig. \ref{fig:System-model}.
Specifically, we consider $S$ LEO satellites at altitude $H$ and $K$ UTs on the ground.
Each satellite is equipped with a uniform planar array (UPA) of $M = M_{\text{x}}M_{\text{y}}$ antenna elements, where $M_{\text{x}}$ and $M_{\text{y}}$ represent the number of antennas along the $\text{x}$-axis and $\text{y}$-axis, respectively.
Similarly, each UT is equipped with a UPA consisting of $N=N_{\text{x}{'}}N_{\text{y}{'}}$ elements, where $N_{\text{x}{'}}$ and $N_{\text{y}{'}}$ denote the number of antennas along the $\text{x}{'}$-axis and the $\text{y}{'}$-axis.
We define the index sets of satellites and UTs as $\mathcal{S} = \{1,2,\ldots,S\}$ and $\mathcal{K} = \{1,2,\ldots,K\}$, respectively.
The subset of satellites serving UT $k \in \mathcal{K}$ is denoted by $\mathcal{S}_k$, and the subset of UTs served by satellites $s \in \mathcal{S}$ is denoted by $\mathcal{K}_s$.
Due to potential overlaps in satellite coverage areas, as depicted in Fig. \ref{fig:System-model}, we assume cooperative transmission among satellites to jointly serve UTs within these overlapping regions.

\subsection{Satellite-to-UT Channel Model} \label{sub:Tx-signal-channel}

The received signal $\mathbf{y}_k \in \mathbb{C}^{N \times 1}$ at each UT $k$ is given by
\begin{align}
    \mathbf{y}_k = \sum\nolimits_{s\in\mathcal{S}} \mathbf{H}_{k,s} \mathbf{x}_s + \mathbf{z}_k = \mathbf{H}_k \mathbf{x} + \mathbf{z}_k, \label{eq:received-signal-UT-k}
\end{align}
where $\mathbf{H}_{k,s} \in \mathbb{C}^{N \times M}$ denotes the channel matrix from satellite $s$ to UT $k$, $\mathbf{x}_s \in \mathbb{C}^{M \times 1}$ is the transmit signal vector from satellite $s$, and $\mathbf{z}_k\sim \mathcal{CN}(\mathbf{0}, \sigma_z^2\mathbf{I})$ is the additive noise vector. 
We define the aggregated channel matrix $\mathbf{H}_k\in\mathbb{C}^{N\times MS}$ and the stacked transmitted signal vector $\mathbf{x}\in\mathbb{C}^{MS\times 1}$ across all the satellites as $\mathbf{H}_k=[\mathbf{H}_{k,1} \dotsi \mathbf{H}_{k,S}]$ and $\mathbf{x}=[\mathbf{x}_1^H \dotsi \mathbf{x}_S^H]^H$, respectively.

Following \cite{Xiang:TC24}, the channel matrix $\mathbf{H}_{k,s}$ is expressed as
\begin{align}
    \mathbf{H}_{k,s} = \mathbf{d}_{k,s}\mathbf{g}_{k,s}^H, \label{eq:channel-matrix}
\end{align}
where $\mathbf{d}_{k,s}\in\mathbb{C}^{N\times 1}$ and $\mathbf{g}_{k,s}\in\mathbb{C}^{M\times 1}$ denote the receive and transmit array response vectors at UT $k$ and satellite $s$, respectively. 
The array response vectors for the $l$-th propagation path, denoted by $\mathbf{d}_{k,s,l}$ and $\mathbf{g}_{k,s,l}$, are defined as
$\mathbf{g}_{k,s,l} = \mathbf{e}_{\text{x}}(q_{{k,s,l}}^{\text{x}}) \otimes \mathbf{e}_{\text{y}}(q_{k,s,l}^{\text{y}})$ and $\mathbf{d}_{k,s,l} = \mathbf{e}_{\text{x}{'}}(q_{k,s,l}^{\text{x}{'}}) \otimes \mathbf{e}_{\text{y}{'}}(q_{k,s,l}^{\text{y}{'}})$, where the steering vector along axis $\text{v}\in\{\text{x},\text{y},\text{x}{'},\text{y}{'}\}$ is defined as  $\mathbf{e}_{\text{v}}(x)=\frac{1}{n_{\text{v}}}(1, e^{-j2\pi\frac{d_{\text{v}}}{\lambda}x},\ldots,e^{-j2\pi(n_{\text{v}}-1) \frac{d_{\text{v}}}{\lambda}x})^T \in \mathbb{C}^{n_{\text{v}} \times 1}$
with $q_{k,s,l}^{\text{v}}$ being the directional cosine of the $l$-th propagation path along axis $\text{v}$, $n_{\text{v}}$ the number of antennas, $d_{\text{v}}$ the antenna spacing, and $\lambda$ the carrier wavelength.
The directional cosines for the $l$-th path are given by $q_{k,s,l}^{\text{x}}=\sin{\theta_{k,s,l}^{\text{y}}} \cos{\theta_{k,s,l}^{\text{x}}},\, q_{k,s,l}^{\text{y}} = \cos{\theta_{k,s,l}^{\text{y}}}, \,q_{k,s,l}^{\text{x}{'}} = \sin{\varphi_{k,s,l}^{\text{y}{'}}} \cos{\varphi_{k,s,l}^{\text{x}{'}}},$ and $q_{k,s,l}^{\text{y}{'}} = \cos{\varphi_{k,s,l}^{\text{y}{'}}}$, where $\theta_{k,s,l}^{\text{x}},$ $\theta_{k,s,l}^{\text{y}},\,\varphi_{k,s,l}^{\text{x}{'}}$ and $\varphi_{k,s,l}^{\text{y}{'}}$ are the angles of the $l$-th path, as illustrated in Fig. \ref{fig:System-model}.
Given the considerable distance between LEO satellites and ground UTs, the multipath effects are predominantly shaped by the scattering environment near the UT.
Therefore, the directional cosines for different paths associated with the same UT can be approximated as constant, i.e., $q_{k,s,l}^{\text{x}} = q_{k,s}^{\text{x}}$ and $q_{k,s,l}^{\text{y}} = q_{k,s}^{\text{y}}$ \cite{You:JSAC20}.
As a result, the transmit array response simplifies to $\mathbf{g}_{k,s,l} = \mathbf{g}_{k,s}$, which varies slowly over time and is assumed to be known at satellite $s$ \cite{Xiang:TC24}.

On the other hand, the receive array response $\mathbf{d}_{k,s,l}$ may vary with the path index $l$, and the overall receive response vector $\mathbf{d}_{k,s}$ is modeled as 
\begin{align}
    \mathbf{d}_{k,s} = \sqrt{\frac{\beta_{k,s}\kappa_{k,s}}{\kappa_{k,s}+1}} \mathbf{d}_{k,s,0} + \sqrt{\frac{\beta_{k,s}}{\kappa_{k,s}+1}} \hat{\mathbf{d}}_{k,s},
\end{align}
where $\mathbf{d}_{k,s,0}$ is the line-of-sight (LoS) component derived from $\{\varphi_{k,s,0}^{\text{x}'},\varphi_{k,s,0}^{\text{y}'}\}$ satisfying \begin{align}
    \sin\left(\varphi_{k,s,0}^{\text{x}'}\right)\sin\left(\varphi_{k,s,0}^{\text{y}'}\right) = \sin \alpha_{k,s}.
\end{align}
The parameter $\beta_{k,s} =G_{\text{sat}}G_{\text{ut}} / (4\pi D_{k,s}/\lambda)^2$ denotes the average channel power 
with the antenna gains $G_{\text{sat}}$ and $G_{\text{ut}}$, the propagation distance $D_{k,s}$, and the carrier wavelength $\lambda$.
The Rician factor is denoted by $\kappa_{k,s} > 0$, and 
$\hat{\mathbf{d}}_{k,s} \sim \mathcal{CN}(\mathbf{0}, \mathbf{\Sigma}_{k,s})$ represents the non LoS (NLoS) component with covariance matrix $\mathbf{\Sigma}_{k,s} \succeq \mathbf{0}$ normalized such that $\text{tr}(\mathbf{\Sigma}_{k,s})=1$.

Based on the discussion above, the aggregated channel matrix $\mathbf{H}_{k}$ from all satellites to UT $k$ is given by
\begin{align}
    \mathbf{H}_k = \mathbf{D}_k \mathbf{G}_k^H,
\end{align}
where $\mathbf{D}_k=[\mathbf{d}_{k,1} \dotsi \mathbf{d}_{k,S}] \in \mathbb{C}^{N \times S}$, and $\mathbf{G}_k = \mathrm{blkdiag}(\{ \mathbf{g}_{k,s} \}_{s \in \mathcal{S}}) \in \mathbb{C}^{MS \times S}$.

We assume that each satellite $s$ independently estimates the local sCSI to its associated UTs, including the set $\{\beta_{k,s},\, \kappa_{k,s}, \, \mathbf{g}_{k,s}, \, \mathbf{d}_{k,s,0}, \, \mathbf{\Sigma}_{k,s}\}_{k \in \mathcal{K}_s}$.
Since this sCSI is determined by relatively stable parameters such as the satellite-UT geometry, we assume that it remains constant during the precoding design phase \cite{You:JSAC20}.
After estimating sCSI, each satellite subsequently shares its local sCSI with other cooperating satellites via ISLs.
Through this exchange, all participating $S$ satellites acquire access to the global sCSI required for cooperative transmission.
The delay due to sCSI exchange may make the shared sCSI outdated, leading to imperfect sCSI. A robust design with imperfect sCSI is left for future work.

\subsection{Cooperative RSMA Transmission} \label{sub:cooperative-transmission}

To effectively manage inter-UT interference, we adopt the RSMA framework \cite{Park:Net23}. 
Specifically, we define the data stream vector as $\mathbf{m} = [m_c, m_{p,1}, m_{p,2}, \dots , m_{p,K}]^T \in \mathbb{C}^{(K+1) \times 1},$ where $m_c \sim \mathcal{CN}(0,1)$ represents the common data stream intended for all UTs, and $m_{p,k} \sim \mathcal{CN}(0,1)$ denotes the private data stream intended exclusively for UT $k$.
The common stream $m_c$ decoded by all UTs enables RSMA to manage interference more flexibly than SDMA \cite{Park:Net23}, as both the beamforming vector for $m_c$ and the allocation of the common rate among UTs can be jointly optimized.
We assume the presence of reliable ISLs, which enables the LEO satellites to cooperatively transmit the data streams $\mathbf{m}$.

At each satellite $s$, the data stream vector $\mathbf{m}$ is linearly precoded into the transmit signal vector $\mathbf{x}_s$ using a precoding matrix $\mathbf{Q}_s=[\mathbf{q}_{c,s}, \mathbf{q}_{p,1,s}, \dots , \mathbf{q}_{p,K,s}] \in \mathbb{C}^{M \times (K+1)}$, where $\mathbf{q}_{c,s}$ and $\{ \mathbf{q}_{p,k,s}\}_{k\in\mathcal{K}}$ denote the common and private precoding vectors at satellite $s$, respectively.
Defining the precoding matrix $\mathbf{Q} \in \mathbb{C}^{MS \times (K+1)}$ stacked across all the satellites as $\mathbf{Q} = [\mathbf{Q}_{1}^H \dotsi \mathbf{Q}_{S}^H]^H$, the transmitted signal vector $\mathbf{x}$ across all satellites is given by
\begin{align}
    \textbf{x} = \mathbf{Q}\mathbf{m} = \mathbf{q}_{c}m_c + \sum\nolimits_{k \in \mathcal{K}}\mathbf{q}_{p,k}m_{p,k},
\end{align}
where  $\mathbf{q}_c = [\mathbf{q}_{c,1}^H \dotsi \mathbf{q}_{c,S}^H]^H \in \mathbb{C}^{MS \times 1}$ and $\mathbf{q}_{p,k} = [\mathbf{q}_{p,k,1}^H \dotsi \mathbf{q}_{p,k,S}^H]^H \in \mathbb{C}^{MS \times 1}$.
Each satellite $s$ should satisfy a per-satellite transmit power constraint stated as
\begin{align}
    p_s(\mathbf{Q}) = \text{tr}\left( 
 \mathbf{E}_s^H \left( \mathbf{q}_c\mathbf{q}_c^H + \sum\nolimits_{k\in\mathcal{K}}\mathbf{q}_{p,k}\mathbf{q}_{p,k}^H \right) \mathbf{E}_s\right)
    \leq P,
\end{align} 
where $P$ is the transmit power budget per satellite, and $\mathbf{E}_s \in \mathbb{C}^{MS \times M}$ is a selection matrix consisting of zeros except for the rows from $M(s-1) + 1$ to $Ms$, which form an identity matrix. 

UT $k$ decodes the common data stream $m_c$ and its own private stream $m_{p,k}$ using successive interference cancellation (SIC) decoding.
Unlike the private streams, the common stream should be decoded by all UTs.
Accordingly, the ergodic rates of the common and private streams are given by $R_c = \min_{k \in \mathcal{K}} f_{c,k}(\mathbf{Q})$ and $R_{p,k} = f_{p,k}(\mathbf{Q})$, respectively, where
\begin{align}
    &f_{c,k}(\mathbf{Q}) = \mathbb{E}_{\{\hat{\mathbf{d}}_{k,s}\}_{s \in \mathcal{S}}} 
 \left[\log_2 \det \left(\mathbf{I} +  \mathbf{\Sigma}_{c,k}^{-1} \mathbf{H}_k \mathbf{q}_c\mathbf{q}_c^H \mathbf{H}_k^H\right)\right], \label{eq:ergodic-rate}\\
    &f_{p,k}(\mathbf{Q}) = \mathbb{E}_{\{\hat{\mathbf{d}}_{k,s}\}_{s \in \mathcal{S}}} 
 \left[\log_2\det \left(\mathbf{I} + \mathbf{\Sigma}_{p,k}^{-1} \mathbf{H}_k \mathbf{q}_{p,k}\mathbf{q}_{p,k}^H \mathbf{H}_{k}^H\right) \right], \nonumber 
\end{align}
with $\mathbf{\Sigma}_{c,k} = \sum_{l\in\mathcal{K}} \mathbf{H}_k \mathbf{q}_{p,l}\mathbf{q}_{p,l}^H \mathbf{H}_k^H + \sigma_z^2\mathbf{I}$ and $\mathbf{\Sigma}_{p,k} = \sum_{l \in \mathcal{K} \setminus \{k\}} \mathbf{H}_k\mathbf{q}_{p,l} \mathbf{q}_{p,l}^H \mathbf{H}_k^H + \sigma_z^2\mathbf{I}$.
The function $f_{c,k}(\mathbf{Q})$ represents the maximum ergodic rate at which the common stream $m_c$ can be decoded by UT $k$.
The overall data rate for each UT $k$ is given by $R_k = R_{c,k} + R_{p,k}$, where $R_{c,k} \geq 0$ is the portion of the common data rate allocated to UT $k$, subject to the constraint $\sum_{k \in \mathcal{K}}R_{c,k}=R_c$.

\section{Proposed sCSI-Aware Optimization} \label{sec:proposed-sCSI-aware-optimization}

To ensure the fairness among UTs, we aim to optimize the precoding matrix $\mathbf{Q}$ to maximize the MMFR metric\footnote{We note that our approach can straightforwardly be applied to maximizing the weighted sum-rate $\sum_{k\in\mathcal{K}}R_k$, which is another widely used fairness-promoting metric.}, defined as $\min_{k \in \mathcal{K}}R_k$ (i.e., the minimum ergodic rate across all UTs), while satisfying the power constraints at all LEO satellites. 
This optimization problem can be formulated as
\begin{subequations} \label{eq:problem-original}
\begin{align}
    \underset{\mathbf{Q}, \mathbf{R}}{\mathrm{max.}}\,\,\, &\min_{k \in \mathcal{K}} \left(R_{c,k} + R_{p,k}\right) \label{eq.problem-objective function} \\
    \mathrm{s.t. }\,\,\,\, & \sum\nolimits_{l \in \mathcal{K}} R_{c,l} \leq f_{c,k}\left(\mathbf{Q}\right), \, k\in\mathcal{K}, \label{eq:problem-original-common-rate}\\
    & R_{p,k} \leq f_{p,k}\left(\mathbf{Q}\right), \, k \in \mathcal{K}, \label{eq:problem-original-private-rate}\\
    & p_s\left(\mathbf{Q}\right) \leq P,\, s\in\mathcal{S}, \label{eq:problem-original-power} \\
    & \mathbf{q}_{p,k,s} = \mathbf{0},\,\, \mathrm{if} \,\,k \notin \mathcal{K}_s, \, s\in \mathcal{S}, \label{eq:problem-original-zero}
\end{align}
\end{subequations}
where $\mathbf{R} = \{R_{j,l} \}_{j\in \{c,p\},l\in\mathcal{K}}$.
The constraint (\ref{eq:problem-original-zero}) ensures that each satellite $s$ has access to the private data streams intended only for its associated UTs $\mathcal{K}_s$.

We note that the rate functions $f_{c,k}(\mathbf{Q})$ and $f_{p,k}(\mathbf{Q})$ do not admit closed-form expressions and typically require computationally intensive sample-mean approximations of ergodic rates.
To overcome this difficulty, we approximate the problem using tractable closed-form upper bounds on the ergodic rates, as presented in Proposition 1.

\begin{prop}[Ergodic Rate Upper Bounds]
The ergodic rate functions of the common and private data streams, $f_{c,k}(\mathbf{Q})$ and $f_{p,k}(\mathbf{Q})$ in (\ref{eq:ergodic-rate}), are upper bounded as $f_{c,k}(\mathbf{Q}) \leq f_{c,k}^{\mathrm{UB}}(\mathbf{Q})$ and $f_{p,k}(\mathbf{Q}) \leq f_{p,k}^{\mathrm{UB}}(\mathbf{Q})$, respectively, where
\begin{subequations} \label{eq:upper-bounds}
\begin{align}
    & f_{c,k}^{\mathrm{UB}}(\mathbf{Q}) = \log_2\det\left(\mathbf{I} + \left(\mathbf{\Sigma}_{c,k}^{\mathrm{UB}}\right)^{-1} \hat{\mathbf{H}}_k \mathbf{q}_c\mathbf{q}_c^H \hat{\mathbf{H}}_k^H\right),\label{eq:upper-bound-common} \\
    & f_{p,k}^{\mathrm{UB}}(\mathbf{Q}) = \log_2\det\left(\mathbf{I} + \left(\mathbf{\Sigma}_{p,k}^{\mathrm{UB}}\right)^{-1} \hat{\mathbf{H}}_k \mathbf{q}_{p,k}\mathbf{q}_{p,k}^H \hat{\mathbf{H}}_k^H\right).\label{eq:upper-bound-private}
\end{align}
\end{subequations}
Here, we have defined $\mathbf{\Sigma}_{c,k}^{\mathrm{UB}} = \sum_{l \in \mathcal{K}}\hat{\mathbf{H}}_k \mathbf{q}_{p,l}\mathbf{q}_{p,l}^H \hat{\mathbf{H}}_k^H + \sigma_z^2\mathbf{I}$, $\mathbf{\Sigma}_{p,k}^{\mathrm{UB}} = \sum_{l \in \mathcal{K} \setminus \{k\}}\hat{\mathbf{H}}_k \mathbf{q}_{p,l}\mathbf{q}_{p,l}^H \hat{\mathbf{H}}_k^H + \sigma_z^2\mathbf{I}$, $\hat{\mathbf{H}}_k = (\mathbf{G}_k \hat{\mathbf{D}}_k\mathbf{G}_k^H)^{\frac{1}{2}}$ with $\hat{\mathbf{D}}_{k}=\mathbb{E}[\mathbf{D}_k^H \mathbf{D}_k]$ given as a function of sCSI as $\hat{\mathbf{D}}_{k}=\mathrm{diag}(\{\beta_{k,s}\}_{s \in \mathcal{S}}) + \mathbf{D}_k^{\mathrm{LoS}}$, and $\mathbf{D}_k^{\mathrm{LoS}}$ is given by $[\mathbf{D}_{k}^{\mathrm{LoS}}]_{s',s}=\sqrt{\frac{\kappa_{k,s'}\beta_{k,s'}\kappa_{k,s}\beta_{k,s}}{(\kappa_{k,s'}+1)(\kappa_{k,s}+1)}}\mathbf{d}_{k,s',0}^H \mathbf{d}_{k,s,0}$ for $s' \neq s$, while its diagonal elements are zero.
\end{prop}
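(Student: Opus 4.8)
The plan is to establish each upper bound by applying Jensen's inequality to the concave logarithm, thereby pushing the expectation inside the $\log\det$, and then to evaluate the resulting deterministic quantity in closed form using the second-order statistics of the channel. I will treat the common-stream bound $f_{c,k}(\mathbf{Q}) \le f_{c,k}^{\mathrm{UB}}(\mathbf{Q})$ in detail; the private-stream bound follows by the identical argument with $\mathcal{K}$ replaced by $\mathcal{K}\setminus\{k\}$ in the interference sum.

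First I would rewrite the instantaneous rate inside the expectation in \eqref{eq:ergodic-rate} in the equivalent ``ratio-of-determinants'' form $\log_2\det(\mathbf{\Sigma}_{c,k} + \mathbf{H}_k\mathbf{q}_c\mathbf{q}_c^H\mathbf{H}_k^H) - \log_2\det(\mathbf{\Sigma}_{c,k})$, and observe that the numerator term is a concave function of the aggregate signal-plus-interference covariance while the denominator is convex; however, the cleaner route is to note that $g(\mathbf{X}) = \log_2\det(\mathbf{I} + \mathbf{A}\mathbf{X}\mathbf{A}^H)$ type expressions, after suitable manipulation, are concave in the relevant Gram matrix. Concretely, since $\mathbf{H}_k = \mathbf{D}_k\mathbf{G}_k^H$, every term $\mathbf{H}_k\mathbf{q}\mathbf{q}^H\mathbf{H}_k^H = \mathbf{D}_k(\mathbf{G}_k^H\mathbf{q})(\mathbf{G}_k^H\mathbf{q})^H\mathbf{D}_k^H$, so the entire log-determinant depends on the random channel only through $\mathbf{D}_k$, and more specifically through the deterministic precoder-weighted combinations of $\mathbf{D}_k$. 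The key step is then to apply Jensen's inequality with respect to the expectation over the NLoS components $\{\hat{\mathbf{d}}_{k,s}\}_{s\in\mathcal{S}}$: because $\log_2\det(\mathbf{I} + \mathbf{M})$ is concave in the positive-semidefinite argument $\mathbf{M}$ and the map from $\mathbf{D}_k$ to the argument is appropriately affine/quadratic, moving $\mathbb{E}[\cdot]$ inside yields an upper bound governed by the second moment $\hat{\mathbf{D}}_k = \mathbb{E}[\mathbf{D}_k^H\mathbf{D}_k]$.

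Having reduced everything to $\hat{\mathbf{D}}_k$, I would compute this matrix entrywise. Its $(s',s)$ entry is $\mathbb{E}[\mathbf{d}_{k,s'}^H\mathbf{d}_{k,s}]$. For the diagonal ($s'=s$) I would use the normalization $\mathrm{tr}(\mathbf{\Sigma}_{k,s})=1$ together with $\|\mathbf{d}_{k,s,0}\|^2=1$ to get $\beta_{k,s}$, after the Rician weights $\kappa_{k,s}/(\kappa_{k,s}+1)$ and $1/(\kappa_{k,s}+1)$ recombine. For the off-diagonal ($s'\neq s$) I would use the independence of the zero-mean NLoS vectors $\hat{\mathbf{d}}_{k,s'}$ and $\hat{\mathbf{d}}_{k,s}$ across distinct satellites, so their cross-term expectation vanishes and only the product of the two LoS contributions survives, giving exactly the stated $[\mathbf{D}_k^{\mathrm{LoS}}]_{s',s}$. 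Writing $\hat{\mathbf{H}}_k = (\mathbf{G}_k\hat{\mathbf{D}}_k\mathbf{G}_k^H)^{1/2}$ then recovers the deterministic surrogate channel, and substituting into the post-Jensen expression produces \eqref{eq:upper-bound-common}.

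The main obstacle I anticipate is justifying the Jensen step rigorously, i.e.\ verifying that the composition mapping the Gaussian vector $\{\hat{\mathbf{d}}_{k,s}\}$ to the rate integrand is genuinely concave in the quantity whose expectation equals $\hat{\mathbf{D}}_k$, rather than merely in $\mathbf{D}_k$ itself. The delicate point is that both the desired-signal determinant and the interference covariance $\mathbf{\Sigma}_{c,k}$ depend on the \emph{same} random $\mathbf{D}_k$, so the integrand is a difference of two concave-in-$\mathbf{D}_k$ pieces and is not obviously concave overall. I would resolve this by working with the equivalent SINR/MMSE representation in which the rate is written as a single log-determinant $\log_2\det(\mathbf{I} + \mathbf{\Sigma}_{c,k}^{-1}\mathbf{H}_k\mathbf{q}_c\mathbf{q}_c^H\mathbf{H}_k^H)$ and exploiting that $\log_2\det(\mathbf{I}+\mathbf{A}\mathbf{R}^{-1}\mathbf{A}^H)$ is jointly concave in the pair $(\mathbf{A}\mathbf{A}^H,\mathbf{R})$ expressed through the full signal-plus-interference covariance; applying Jensen to the full covariance $\mathbb{E}[\mathbf{H}_k(\mathbf{q}_c\mathbf{q}_c^H + \sum_l\mathbf{q}_{p,l}\mathbf{q}_{p,l}^H)\mathbf{H}_k^H]$ and to the interference covariance separately, and checking the bound preserves the correct direction, is where the care is needed. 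Once that concavity is certified, the remaining algebra is the routine moment computation above.
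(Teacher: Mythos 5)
Your computation of $\hat{\mathbf{D}}_k$ and your identification of the central difficulty are both correct: the rate integrand is a difference of two log-determinants driven by the \emph{same} random $\mathbf{D}_k$, so no naive concavity argument applies. However, the resolution you propose does not work. The function $\log_2\det(\mathbf{I}+\mathbf{A}\mathbf{R}^{-1}\mathbf{A}^H)$ is \emph{not} jointly concave in $(\mathbf{A}\mathbf{A}^H,\mathbf{R})$: already in the scalar case $f(s,r)=\log(1+s/r)=\log(s+r)-\log(r)$ the Hessian has determinant $-1/\bigl(r^2(s+r)^2\bigr)<0$, so it is indefinite. Likewise, applying Jensen's inequality ``separately'' to the signal-plus-interference term and to the interference term fails, because concavity of $\log\det$ gives $\mathbb{E}[\log\det(\cdot)]\le\log\det(\mathbb{E}[\cdot])$ for the first term but the \emph{reverse} direction for the negated second term; the two resulting inequalities point in opposite directions and their sum bounds nothing.

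The paper closes exactly this gap with a step your proposal is missing. Using $\det(\mathbf{I}+\mathbf{A}\mathbf{B})=\det(\mathbf{I}+\mathbf{B}\mathbf{A})$, both log-determinants are first rewritten as functions of the single transmit-side Gram matrix $\mathbf{X}=\mathbf{H}_k^H\mathbf{H}_k$, i.e., as $\log_2\det\bigl(\sigma_z^2\mathbf{I}+(\mathbf{q}_c\mathbf{q}_c^H+\sum_{l\in\mathcal{K}}\mathbf{q}_{p,l}\mathbf{q}_{p,l}^H)\mathbf{X}\bigr)-\log_2\det\bigl(\sigma_z^2\mathbf{I}+\sum_{l\in\mathcal{K}}\mathbf{q}_{p,l}\mathbf{q}_{p,l}^H\mathbf{X}\bigr)$. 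The cited result \cite[Lem. 2]{Sun:TC15} then certifies that $\mathbf{X}\mapsto\log_2\det(\mathbf{I}+\mathbf{A}\mathbf{X})-\log_2\det(\mathbf{I}+\mathbf{B}\mathbf{X})$ is concave whenever $\mathbf{A}-\mathbf{B}\succeq\mathbf{0}$, which holds here because the difference of the two precoder Gram matrices is $\mathbf{q}_c\mathbf{q}_c^H\succeq\mathbf{0}$. A \emph{single} application of Jensen's inequality in the variable $\mathbf{X}$ then yields the bound with $\mathbb{E}[\mathbf{H}_k^H\mathbf{H}_k]=\hat{\mathbf{H}}_k^H\hat{\mathbf{H}}_k=\mathbf{G}_k\hat{\mathbf{D}}_k\mathbf{G}_k^H$, after which your (correct) entrywise evaluation of $\hat{\mathbf{D}}_k$ completes the proof. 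Without this lemma, or an equivalent concavity certificate for the \emph{difference} as a function of one common variable, your argument does not establish the inequality.
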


\begin{proof}
To prove $f_{c,k}(\mathbf{Q}) \leq f_{c,k}^{\mathrm{UB}}(\mathbf{Q})$, we start from
\begin{subequations} \label{eq:proof}
\begin{align}
    &\mathbb{E}\left[\log_2 \det \left(\mathbf{I} + \mathbf{\Sigma}_{c,k}^{-1} \mathbf{H}_k \mathbf{q}_c\mathbf{q}_c^H \mathbf{H}_k^H \right)\right] \label{eq:proof-1} \\
    &=\mathbb{E}\left[\log_2 \det \left(\sigma_z^2 \mathbf{I} +  \left(\mathbf{q}_c\mathbf{q}_c^H + \sum_{l \in \mathcal{K}} \mathbf{q}_{p,l}\mathbf{q}_{p,l}^H\right) \mathbf{H}_k^H\mathbf{H}_k\right) \right] \nonumber \\
    &\,\,\,\,\,-\mathbb{E}\left[\log_2 \det \left(\sigma_z^2 \mathbf{I} + \sum\nolimits_{l \in \mathcal{K}} \mathbf{q}_{p,l} \mathbf{q}_{p,l}^H\mathbf{H}_k^H\mathbf{H}_k\right)\right] \label{eq:proof-3} \\
    & \stackrel{\text{(a)}}{\leq} \log_2\det \left( \sigma_z^2\mathbf{I} + \left(\mathbf{q}_c\mathbf{q}_c^H + \sum_{l \in \mathcal{K}} \mathbf{q}_{p,l}\mathbf{q}_{p,l}^H \right) \mathbb{E} \left[\mathbf{H}_k^H\mathbf{H}_k \right]\right) \nonumber \\
    &\,\,\,\,\,-\log_2\det \left(\sigma_z^2\mathbf{I} + \sum\nolimits_{l \in \mathcal{K}} \mathbf{q}_{p,l}\mathbf{q}_{p,l}^H \mathbb{E} \left[\mathbf{H}_k^H\mathbf{H}_k \right] \right), \label{eq:proof-4} 
\end{align}
\end{subequations}
where the expectation is taken over $\{\hat{\mathbf{d}}_{k,s}\}_{s \in \mathcal{S}}$, and (a) follows from the concavity of the function $f(\mathbf{X}) = \log_2\det(\mathbf{I} + \mathbf{A}\mathbf{X}) - \log_2\det(\mathbf{I} + \mathbf{B}\mathbf{X})$ for $\mathbf{A}-\mathbf{B}\succeq \mathbf{0}$ \cite[Lem. 2]{Sun:TC15}, and by applying Jensen's inequality. 
By substituting $\mathbb{E}_{\{\hat{\mathbf{d}}_{k,s}\}_{s \in \mathcal{S}}} [\mathbf{H}_k^H\mathbf{H}_k ] = \hat{\mathbf{H}}_k^H\hat{\mathbf{H}}_k$ into the right-hand side (RHS) of (\ref{eq:proof-4}), we obtain the desired upper bound $f_{c,k}(\mathbf{Q}) \leq f_{c,k}^{\mathrm{UB}}(\mathbf{Q})$.
The other upper bound $f_{p,k}(\mathbf{Q}) \leq f_{p,k}^{\mathrm{UB}}(\mathbf{Q})$ follows similarly by replacing the interference terms accordingly.
\end{proof}

Note that the upper bounds in (\ref{eq:upper-bounds}) are provided in closed-form expressions and do not involve the expectation operator.
Using these upper bounds, the original problem (\ref{eq:problem-original}) is approximated as\footnote{Since the MMFR optimized from (\ref{eq:problem-ub}) is not guaranteed to be achievable, the ergodic rate functions $f_{c,k}(\mathbf{Q})$ and $f_{p,k}(\mathbf{Q})$ should be evaluated after optimization to assess the actual performance.}\begin{subequations} \label{eq:problem-ub}
\begin{align}
    \max_{\mathbf{Q}, \mathbf{R}}\,\,\, &\min_{k \in \mathcal{K}} \left(R_{c,k} + R_{p,k}\right) \label{eq:problem-ub-objective-function} \\
    \mathrm{s.t. }\,\,\,\, & \sum\nolimits_{l \in \mathcal{K}} R_{c,l} \leq f_{c,k}^{\text{UB}}\left(\mathbf{Q}\right), \, k\in\mathcal{K}, \label{eq:problem-ub-common-rate}\\
    & R_{p,k} \leq f_{p,k}^{\text{UB}}\left(\mathbf{Q}\right), \, k \in \mathcal{K}, \label{eq:problem-ub-private-rate}\\
    & \mathrm{(\ref{eq:problem-original-power}), (\ref{eq:problem-original-zero})}. \label{eq:problem-ub-power-zero}
\end{align}
\end{subequations}
Although the reformulated problem (\ref{eq:problem-ub}) is more tractable than (\ref{eq:problem-original}), it remains non-convex due to the constraints (\ref{eq:problem-ub-common-rate}) and (\ref{eq:problem-ub-private-rate}).
In what follows, we propose a WMMSE-based algorithm to address this challenge.

The upper bound rate function $f_{c,k}^{\text{UB}}(\mathbf{Q})$ in (\ref{eq:upper-bound-common}) measures the achievable rate of the common data stream $m_c$ when decoded from the virtual received signal $\hat{\mathbf{y}}_{c,k} = \hat{\mathbf{H}}_k \mathbf{q}_c m_c + \sum_{l\in\mathcal{K}} \hat{\mathbf{H}}_k\mathbf{q}_{p,l} m_{p,l} + \mathbf{z}_k$, where $\hat{\mathbf{H}}_k$ is a known sCSI-dependent matrix, as defined in Proposition 1. Similarly, the private upper bound rate $f_{p,k}^{\text{UB}}(\mathbf{Q})$ in (\ref{eq:upper-bound-private}) corresponds to decoding the private stream $m_{p,k}$ from $\hat{\mathbf{y}}_{p,k} = \hat{\mathbf{H}}_k \mathbf{q}_{p,k} m_{p,k} + \sum_{l\in\mathcal{K}\setminus\{k\}} \hat{\mathbf{H}}_k \mathbf{q}_{p,l} m_{p,l} + \mathbf{z}_k$.

With the observation above, we address problem (\ref{eq:problem-ub}) using the WMMSE approach. Specifically, we define the mean squared error (MSE) functions for the common and private data streams as
\begin{subequations}   
\begin{align}
    e_{c,k}\left(\mathbf{u}_{c,k}, \mathbf{Q}\right) &= \mathbb{E}_{\mathbf{m}, \mathbf{z}_k}\left[|m_c - \mathbf{u}_{c,k}^H\hat{\mathbf{y}}_{c,k}|^2\right] = \big|1 - \mathbf{u}_{c,k}^H \hat{\mathbf{H}}_k \mathbf{q}_c \big|^2 + \mathbf{u}_{c,k}^H \mathbf{\Sigma}_{c,k}^{\text{UB}} \mathbf{u}_{c,k}, \\
    e_{p,k}\left(\mathbf{u}_{p,k}, \mathbf{Q}\right) &= \mathbb{E}_{\mathbf{m}, \mathbf{z}_k}\left[|m_{p,k} - \mathbf{u}_{p,k}^H\hat{\mathbf{y}}_{p,k}|^2\right] = \big|1 - \mathbf{u}_{p,k}^H \hat{\mathbf{H}}_k \mathbf{q}_{p,k}\big|^2 + \mathbf{u}_{p,k}^H \mathbf{\Sigma}_{p,k}^{\text{UB}} \mathbf{u}_{p,k},
\end{align}
\end{subequations}
where $\mathbf{u}_{c,k} \in \mathbb{C}^{MS \times 1}$ and $\mathbf{u}_{p,k} \in \mathbb{C}^{MS \times 1}$ represent combining vectors.

Using the MSE expressions, the rate constraints in (\ref{eq:problem-ub-common-rate}) and (\ref{eq:problem-ub-private-rate}) can be equivalently rewritten as
\begin{subequations} \label{eq:constraint-wmmse}
\begin{align}
    &\sum\nolimits_{l\in\mathcal{K}}R_{c,l}\ln2 \leq \max_{v_{c,k}>0, \mathbf{u}_{c,k}} \bigg\{\ln v_{c,k} -  v_{c,k}e_{c,k} \left(\mathbf{u}_{c,k},\mathbf{Q}\right)  + 1\bigg\}, \,\, k\in\mathcal{K}, \label{eq:constraint-wmmse-common} \\
    &R_{p,k} \ln2 \leq \max_{v_{p,k}>0, \mathbf{u}_{p,k}} \bigg\{\ln v_{p,k} - v_{p,k}e_{p,k} \left(\mathbf{u}_{p,k}, \mathbf{Q}\right) + 1 \bigg\}, \,\, k\in\mathcal{K},\label{eq:constraint-wmmse-private}
\end{align}
\end{subequations}
where $v_{c,k}$ and $v_{p,k}$ are weight variables associated with the respective MSE functions.
The optimal combining vectors and weight values, that maximize the RHSs of (\ref{eq:constraint-wmmse}), for given $\mathbf{Q}$ are given by
\begin{subequations} \label{eq:optimal-combiner-weight}
\begin{align}
    &\mathbf{u}_{c,k} =  \left(\hat{\mathbf{H}}_k\mathbf{q}_c\mathbf{q}_c^H\hat{\mathbf{H}}_k^H + \mathbf{\Sigma}_{c,k}^{\text{UB}}\right)^{-1} \hat{\mathbf{H}}_k\mathbf{q}_c, \label{eq:optimal-combiner-common} \\
    &\mathbf{u}_{p,k} = \left(\hat{\mathbf{H}}_k\mathbf{q}_{p,k}\mathbf{q}_{p,k}^H\hat{\mathbf{H}}_k^H + \mathbf{\Sigma}_{p,k}^{\text{UB}}\right)^{-1} \hat{\mathbf{H}}_k\mathbf{q}_{p,k}, \label{eq:optimal-combiner-private} \\
    &v_{c,k} = \frac{1}{e_{c,k} \left(\mathbf{u}_{c,k}, \mathbf{Q}\right)},\,\,
    v_{p,k} = \frac{1}{e_{p,k}\left(\mathbf{u}_{p,k}, \mathbf{Q}\right)}. \label{eq:optimal-weight}
\end{align}
\end{subequations}
Accordingly, problem (\ref{eq:problem-ub}) can be equivalently restated as 
\begin{subequations} \label{eq:problem-ub-convexified}
\begin{align} 
    \max_{\mathbf{Q}, \mathbf{R}, \mathbf{u}, \mathbf{v}}\, &\min_{k \in \mathcal{K}} \left(R_{c,k} + R_{p,k}\right) \\
    \mathrm{s.t. }\,\,\, & \mathrm{(\ref{eq:constraint-wmmse-common}), (\ref{eq:constraint-wmmse-private})}, \mathrm{(\ref{eq:problem-original-power}), (\ref{eq:problem-original-zero})}, \\
    & v_{c,k}>0, v_{p,k}>0, \,\, k\in\mathcal{K},
\end{align}
\end{subequations}
where $\mathbf{u}=\{\mathbf{u}_{j,l}\}_{j\in\{c,p\},l\in\mathcal{K}}$ and $\mathbf{v} = \{v_{j,l}\}_{j\in\{c,p\},l\in\mathcal{K}}$.
Note that for fixed auxiliary variables $\{\mathbf{u},\,\mathbf{v}\}$, the optimization over the beamforming matrix $\mathbf{Q}$ in (\ref{eq:problem-ub-convexified}) becomes convex and can be efficiently solved using standard convex solvers such as CVX \cite{Grant:CVX20}. 
Conversely, for a fixed $\mathbf{Q}$, the optimal $\{\mathbf{u}, \mathbf{v}\}$ can be obtained in closed form as in (\ref{eq:optimal-combiner-weight}).
Therefore, the objective function can be monotonically improved by alternately updating the block variables $\mathbf{Q}$ and $\{\mathbf{u}, \mathbf{v}\}$.
The detailed procedure of the WMMSE algorithm is summarized in Algorithm \ref{algorithm-1}.

\begin{algorithm} 
\caption{\label{algorithm-1}Proposed sCSI-aware WMMSE algorithm}
\textbf{\footnotesize{}1}\textbf{ Initialize:}  Initialize $\mathbf{Q}$ such that (\ref{eq:problem-original-power}) and (\ref{eq:problem-original-zero}) are satisfied; \\
\textbf{\footnotesize{}2}\textbf{ Repeat} \\
\textbf{\footnotesize{}3}~~~update $\{\mathbf{u}, \mathbf{v}\}$ as per (\ref{eq:optimal-combiner-weight}) with $\mathbf{Q}$ fixed; \\
\textbf{\footnotesize{}4}~~~update $\{\mathbf{Q},\mathbf{R}\}$ as a solution of (\ref{eq:problem-ub-convexified}) with $\{\mathbf{u}, \mathbf{v}\}$ fixed; \\
\textbf{\footnotesize{}5} \textbf{Until convergence}
\end{algorithm}

The per-iteration complexity of Algorithm 1 is primarily dominated by that of solving the convex problem in Step 4 of each iteration, which is upper bounded by $\mathcal{O}( n_V (n_V^3 + n_C) \log \epsilon )$ \cite[p. 4]{BTal:LN19}, where $n_V$ denotes the number of optimization variables, $n_C$ represents the number of arithmetic operations required to compute the objective and constraint function, and $\epsilon$ is the desired error tolerance level.
For the convex problem in Step 4, we have $n_V = \mathcal{O}(KMS)$ and $n_C = \mathcal{O}(K^2M^3S^3)$. Additionally, we numerically confirmed that Algorithm 1 consistently converges within a few iterations across all simulated cases.

\section{Numerical Results} \label{sec:numerical}

This section presents numerical results to validate the effectiveness of the proposed algorithm. 
Each satellite operates at an altitude of $H = 600$ km and has a maximum nadir angle $\vartheta_{max}=30^\circ$, resulting in overlapping service areas.
The system operates at a carrier frequency of 2 GHz with a bandwidth of $50$ MHz, and the noise temperature is set to $T_n = 290$ K.
We consider antenna configurations of $[M_{\text{x}}, M_{\text{y}}, N_{\text{x}'},N_{\text{y}'}] = [5, 5, 4, 4]$ with antenna spacings being $[d_{\text{x}}, d_{\text{y}}, d_{\text{x}'}, d_{\text{y}'}] = [\lambda, \lambda, \frac{\lambda}{2}, \frac{\lambda}{2}]$.
The antenna gains are set to $[G_{\text{sat}}, G_{\text{ut}}] = [6, 0]$ dBi. 
The Rician factor follows the S-band suburban scenario specified in \cite{3GPP:Release15}.
UTs are randomly distributed within the combined service areas of the satellites. 
The maximum elevation angle $\alpha_{\text{max}}$ is determined by the maximum nadir angle $\vartheta_{\text{max}}$, and the corresponding maximum distance $D_{\text{max}}$ defines the coverage boundary of each satellite.
The association between satellites and UTs is established based on the distance $D_{k,s}$: if UT $k$ falls within the service area of satellite $s$ (i.e., $D_{k,s} \leq D_{\text{max}}$), then UT $k$ and satellite $s$ belong to $\mathcal{K}_s$ and $\mathcal{S}_k$, respectively.
The covariance matrix of the NLoS component $\mathbf{\Sigma}_{k,s}$ is set as $\mathbf{\Sigma}_{k,s} = \mathrm{diag}(\{\mu_n\}_{n=1}^N)$ with $\mu_n = \tilde{\mu}_n/\sum_{l=1}^N\tilde{\mu}_l$, where $\tilde{\mu}_1,\, \dots,\,\tilde{\mu}_N$ are independently sampled from $\mathcal{U}(0,1)$.
The noise variance $\sigma_z^2$ is given by $k_B T_n B$ with the Boltzmann constant $k_B$.

To validate the effectiveness of the proposed sCSI-aware design for the multi-satellite cooperative RSMA transmission scheme, we compare the average MMFR achieved by RSMA and SDMA under both iCSI and sCSI conditions. 
For the sCSI case, we utilize the upper bounds on the ergodic rates for precoding optimization, as detailed in Sec. \ref{sec:proposed-sCSI-aware-optimization}, while the MMFR performance of the resulting precoding matrix is evaluated using the actual ergodic rate functions, computed via sample-mean approximation with 20,000 channel realizations.
As a baseline, we also show the performance of precoders designed solely based on the LoS components, i.e., directional CSI (dCSI).
Additionally, we include the performance of non-cooperative transmission schemes, where each UT is served solely by its nearest satellite.
The performance of SDMA is evaluated by setting the beamforming vector for the common data stream to a zero vector, i.e., $\mathbf{q}_c = \mathbf{0}$.

\begin{figure}
\centering\includegraphics[width=0.9\linewidth]{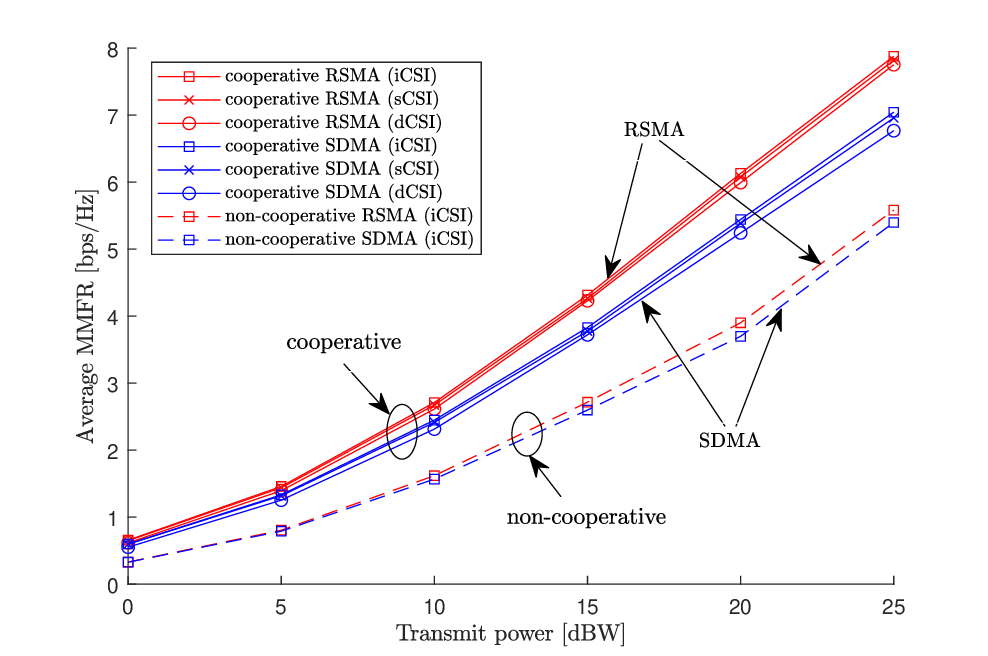}
\caption{\label{fig:Power-MMFR} Average MMFR versus transmit power $P$.}
\end{figure}

Fig. \ref{fig:Power-MMFR} illustrates the average MMFR as a function of satellite transmit power, for a SATCOM system with $S=4$ and $K=6$.
For both cooperative RSMA and SDMA schemes, the sCSI-aware design achieves performance that closely approaches that of the iCSI-based design. This is primarily due to the rank-1 nature of the channel matrices, as described in (\ref{eq:channel-matrix}), where the basis vector $\mathbf{g}_{k,s}$ of the one-dimensional row space of $\mathbf{H}_{k,s}$ is included in the sCSI set.
The performance gain of RSMA over SDMA becomes more pronounced as the transmit power $P$ increases. This is because the signal-to-noise ratio (SNR) at the UTs increases with $P$, and at higher SNRs, interference, which is more effectively handled by RSMA than SDMA, becomes the main performance bottleneck rather than noise.
Moreover, the performance gap between RSMA and SDMA is more significant in the multi-satellite cooperative scenario enabled by ISLs, compared to the non-cooperative transmission case.
This highlights the synergistic benefits of combining RSMA with cooperative satellite transmission strategies.
The performance loss of the dCSI case compared to sCSI is more pronounced at SDMA than RSMA, indicating that the RSMA scheme is more robust to CSI imperfections.

\begin{figure}
\centering\includegraphics[width=0.9\linewidth]{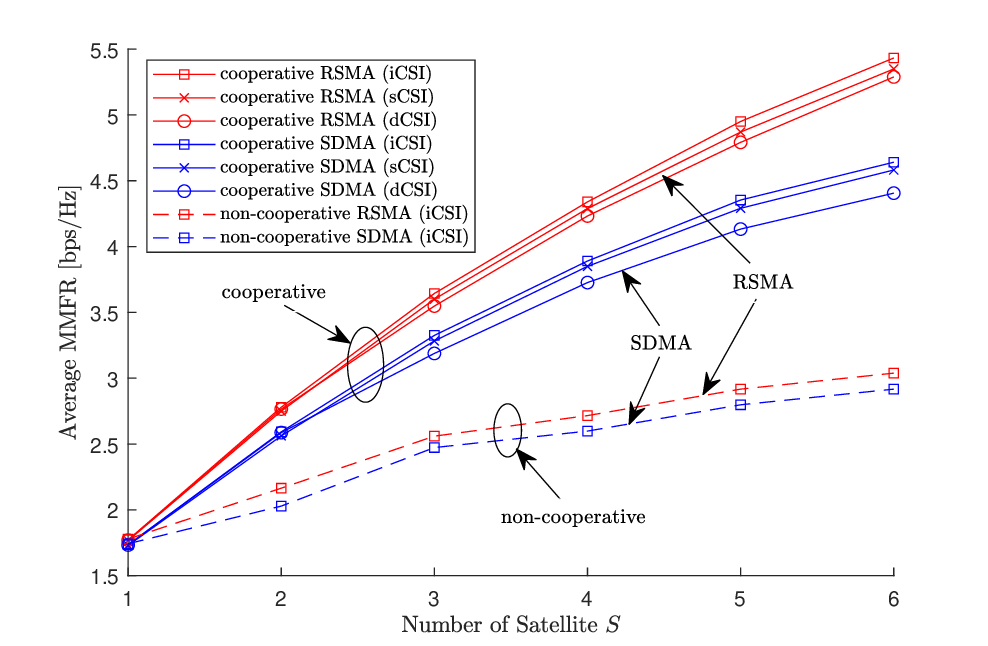}
\caption{\label{fig:Numsat-MMFR} Average MMFR versus the number of satellites $S$.}
\end{figure}

Fig. \ref{fig:Numsat-MMFR} shows the average MMFR while increasing the number of LEO satellites $S$, for a SATCOM system with $K=6$ and $P=15$ dBW.
As the number of satellites increases, each UT is more likely to fall within overlapping coverage areas of multiple satellites. This leads to improved SNR at the UTs, resulting in more pronounced performance gains of RSMA over SDMA.
Furthermore, as satellite density increases with larger $S$, the benefits of inter-satellite cooperation enabled by ISLs become more evident.
These observations support the effectiveness of the proposed approach that integrates RSMA with multi-satellite cooperative transmission, particularly in large-scale satellite constellations such as Starlink and OneWeb.

\section{Conclusion} \label{sec:conclusion}

We have studied the design of mMIMO precoding at LEO satellites for inter-satellite cooperative RSMA transmission.
Considering the challenges in acquiring iCSI in SATCOM, we based the precoding design solely on sCSI.
To handle the difficulty of computing ergodic rates, we approximated the ergodic MMFR maximization problem using closed-form upper bounds and developed a WMMSE-based algorithm.
Numerical results demonstrate that the proposed sCSI-based scheme closely approaches the performance of the iCSI-based counterpart, except in scenarios dominated by NLoS components. Furthermore, the performance gain of RSMA over SDMA becomes more significant when combined with inter-satellite cooperative transmission.

\end{document}